\newtheorem{theorem}{Theorem}
\newtheorem{proposition}{Proposition}
\newtheorem{definition}{Definition}
\newtheorem{lemma}{Lemma}
\newtheorem{remark}{Remark}
\newtheorem{example}{Example}
\DeclareMathOperator{\Equaldef}{\overset{def}{=}}
\title{\LARGE \bf
Strategic multi-task coordination over regular networks of robots with limited computation and communication capabilities}
\author{Yi Wei and Marcos M. Vasconcelos %and Walid Saad % <-this % stops a space
%\thanks{*This work was not supported by any organization}% <-this % stops a space
\thanks{Yi Wei and Marcos M. Vasconcelos are with the Commonwealth Cyber Initiative and the Bradley Department of Electrical and Computer Engineering,
        Virginia Polytechnic Institute and State University, Arlington, VA 22203, USA. E-mail: 
        {\tt\small \{wyi,marcosv\}@vt.edu}}%
%\thanks{Behrouz Touri is with the Department of Electrical and Computer Engineering, University of California San Diego, La Jolla, CA 92093, USA
%        {\tt\small btouri@eng.ucsd.edu}}%
}
\begin{document}

\maketitle
\thispagestyle{empty}
\pagestyle{empty}

%%%%%%%%%%%%%%%%%%%%%%%%%%%%%%%%%%%%%%%%%%%%%%%%%%%%%%%%%%%%%%%%%%%%%%%%%%%%%%%%
\begin{abstract}

%This electronic document is a ÒliveÓ template. The various components of your paper [title, text, heads, etc.] are already defined on the style sheet, as illustrated by the portions given in this document. %by agents with limited sensing, communication, and compute capabilities.

Coordination is a desirable feature in multi-agent systems, allowing the execution of tasks that would be impossible by individual agents. We study coordination by a team of strategic agents choosing to undertake one of the multiple tasks. We adopt a stochastic framework where the agents decide between two distinct tasks whose difficulty is randomly distributed and partially observed. We show that a Nash equilibrium with a simple and intuitive linear structure exists for diffuse prior distributions on the task difficulties. Additionally, we show that the best response of any agent to an affine strategy profile can be nonlinear when the prior distribution is not diffuse. Finally, we state an algorithm that allows us to efficiently compute a data-driven Nash equilibrium within the class of affine policies.

% We show that the best-response to coordination strategy profiles that are affine in other agents' observations is not necessarily affine. Therefore, the search for Nash-Equilibria within the class of affine policies is not without loss of optimality. However, we show that in the assymptotic case when the variances of the prior distributions of the tasks tend to infinity there exists an intuitive Nash-Equilibrium which is linear in the agent's observations. Finally, we obtain a learning algorithm to obtain Nash-Equilibria within the class of affine strategies, and show that
%the learned policies are characterized by the presence of a non-zero threshold parameter. We simulate the task allocation policies for regular graphs and empirically show how the degree to which the agents coordinate depend on the connectivity of the underlying network graph.

\end{abstract}

%%%%%%%%%%%%%%%%%%%%%%%%%%%%%%%%%%%%%%%%%%%%%%%%%%%%%%%%%%%%%%%%%%%%%%%%%%%%%%%%
\section{Introduction}

Task allocation among the members of a group of agents is a classic problem in robotics with a rich history and many applications. Traditionally, task allocation consists of assigning each agent to one of many possible tasks such that a given performance metric is optimized. However, depending on the structure of the objective function, this problem can be computationally intensive. Furthermore, its decentralized implementation may require local communication among the agents, which may be challenging to implement in large-scale systems due to interference. It is also highly costly if the agents operate under limited batteries. An alternative to this approach is to allow the agents to sense the tasks and let them make autonomous decisions about which job to undertake based on a strategy.

We consider a framework for multi-task allocation inspired by a class of Bayesian games known as Global Games \cite{Carlsson:1993,Morris:2003}. In this class of games, the system's state is partially observed by the agents, who do not communicate among themselves, but whose actions affect other agents within their neighborhood. %Based on its signal, an agent takes action. 
The power of such a simple framework is that under simple assumptions on the structure of the payoff function of the agents in the system, structured Bayesian Nash equilibria exist. Moreover, this framework often resolves the problem of selecting one of the multiple equilibria when the system's state is perfectly observed.

The system model herein is depicted in \cref{fig:system}, where a finite number of agents must choose between one of two tasks. Each task is characterized by independent random variables corresponding to their difficulty level, which we call the system's state. The tasks are distributed uniformly among all agents in the system.
However, the coordination among the agents required to complete the overall task occurs by means of local interactions over a network. Additionally, each agent has partial knowledge about the state consisting of two noisy signals, one for each task. Finally, The agent then decides on which of the tasks to undertake, and its associated payoff depends on the chosen task and on the number of neighbors that have made the same decision. We are interested in obtaining strategies that constitute a Bayesian Nash Equilibrium (BNE). 

\begin{figure}[t!]
    \centering
    \includegraphics[width=\columnwidth]{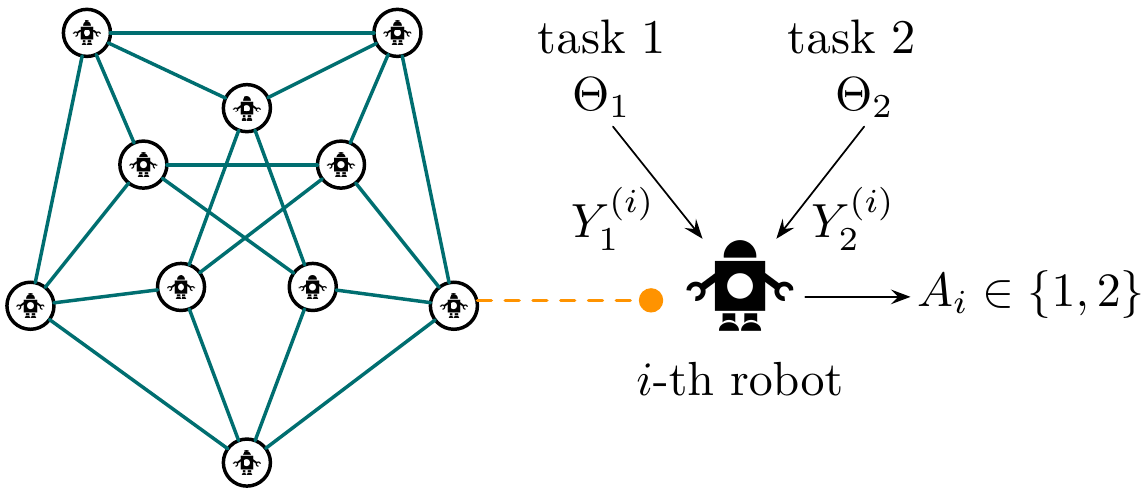}
    \caption{Block diagram for the Multi-Task Global Game considered in this paper.}
    \label{fig:system}
\end{figure}

Game theoretic tools that enable coordination have been used to study task allocation for networks of autonomous agents. Saad et al. \cite{Saad:2010} used a hedonic coalition formation game for data collection tasks. Jang et al. \cite{Jang:2018}  studied the system in \cite{Saad:2010} when the number of agents is very large. 
Krishnamurthy \cite{Krishnamurthy:2009} used global games to coordinate multiple  channel access by a large number of cognitive radios.

%\cite{Vasconcelos:2018},%Parise:2021,Fujimoto:2014,Eksin:2013}

This model is inspired by the work in \cite{Kanakia:2016}, which considers a single task, and the agent's decision corresponds to undertaking a task or not. The main result in \cite{Kanakia:2016} is that, under certain structural assumptions of the agents' payoffs, there exists a BNE where every agent uses a threshold strategy. When there are multiple tasks the characterization of BNE solutions is much more involved. We show that in the limit when the prior distributions of the state variables is \textit{diffuse}, i.e., the variance is asymptotically large, there is a BNE where each agent takes that task corresponding to the smallest measurement. When the priors distributions are not diffuse, we obtain sufficient conditions such that the best-response to a set of affine strategies is characterized by a monotone nonlinear switching function. Then we propose an algorithm that searches for a BNE within the class of affine strategies. We provide numerical results which show how the level of coordination degrades with respect to the variance of the additive noise in the sensing channel. 

%\lipsum[1]

%\lipsum[2]

%\lipsum[3]

%\lipsum[4]

%\subsection{Contributions}

%\lipsum[1]

%\subsection{Notation}

%\lipsum[2]

%\begin{remark}
%There exists an interesting connection between this Global Game formulation games and coalition formation games. Notice that a partition of the group of agents is induced based on the realization of the fundamentals and the corresponding signals.
%\end{remark}

\section{System Model}

Consider a system with $N > 1$ agents indexed by $i\in\{1,\cdots,N\}=[N]$. The agents are autonomous decision makers and chose one out of two tasks which they would like to work on. When an agent undertakes a Task $t$, it contributes a unit of effort in completing it. %Each agent which must decide on one out of two tasks to undertake. 
%Let the Task $t$ be characterized by its difficulty $\Theta_t$, $t\in\{1,2\}$, which are modelled using independent Gaussian random variables.
 Let the difficulty of completing the Task $t$ be randomly selected from a zero mean Gaussian prior distribution on $\mathbb{R}$, i.e.,
\begin{equation}
    \Theta_t \sim %\mathcal{N}(0,\sigma_k^2), 
    \mathcal{N}(0,\sigma_t^2),\ \ t\in\{1,2\}.
\end{equation}

The variables $\Theta_1$ and $\Theta_2$ are independent. Each agent observes two noisy signals about the task difficulties. Let $Y^{(i)}_t$ denote the signal received by the $i$-th agent about the $t$-th fundamental, and be given as follows:
\begin{equation}
    Y^{(i)}_t = \Theta_t +Z^{(i)}_t,
\end{equation}
where $\{Z^{(i)}_1\}_{i=1}^N$ and $\{Z^{(i)}_2\}_{i=1}^N$ are independent and identically distributed sequences of random variables such that
\begin{equation}
Z^{(i)}_t    \sim \mathcal{N}(0,\alpha^2_{t}), \ \ i\in[N].
\end{equation}
The sequences $\{Z^{(i)}_1\}_{i=1}^N$ and $\{Z^{(i)}_2\}_{i=1}^N$ are mutually independent. %xThis Gaussian observation model is typical in the Global Game literature.

Each agent makes a binary decision $A_i\in \{1,2\}$, which corresponds to an agent deciding to undertake Task 1 or Task 2, respectively. The $i$-th agent uses a measurable policy $\gamma_i : \mathbb{R}^2 \rightarrow \{1,2\}$ on its observations to compute its decision. The policy is a mapping from the observation space of the noisy signal to the decision to tackle Task 1 or 2. %Let {$\Gamma_i$} be the set of all possible policy for {$i$}-th agent, because the observation space and decision space are same for all agents,so for all agents,all possible mapping from the observation space of the noisy signal to the decision for all agents are same,so {$\Gamma_0 = \Gamma_1 = \Gamma_2 =\cdots= \Gamma_n$}
Let $Y^{(i)}=[Y^{(i)}_1,Y^{(i)}_2]^{\mathsf{T}}$, the action of the $i$-th agent is computed according to:
\begin{equation}
    A_i = \gamma_i(Y^{(i)}).    
\end{equation}

\subsection{Bayesian Coordination Games}

Assume that the agents interact according to an influence graph $\mathcal{G}=([N],\mathcal{E})$. We assume that the graph is connected, undirected and with unit weights. Let $\mathcal{N}_i$ denote the set of neighbors of agent $i \in [N]$.

Consider two agents $i,j \in [N]$ such that $(i,j)\in\mathcal{E}$. Because these agents are connected, we assume that they engage in a \textit{coordination game} defined by the  payoff matrix in \cref{fig:bimatrix}.
The idea behind the payoff matrix is that Task $t$ with difficulty $\theta_t$ becomes easier as the number agents in the system increases. However, a task will be successfully completed only if a sufficient fraction of agents within an agents neighborhood collaborates in the same task. %This feature captures the fact that the agents are collaborating in the completion of a task. 

\begin{figure}[b]\hspace*{\fill}%
\centering
\begin{game}{2}{2}[$a_i$][$a_j$]
     & $1$ & $2$             \\
 $1$ & $\Big(\frac{1}{|\mathcal{N}_i|}-\frac{\theta_1}{N},\frac{1}{|\mathcal{N}_j|}-\frac{\theta_1}{N}\Big)$ & $\Big(-\frac{\theta_1}{N},-\frac{\theta_2}{N}\Big)$ \\
 $2$ & $\Big(-\frac{\theta_2}{N},-\frac{\theta_1}{N}\Big)$ & $\Big(\frac{1}{|\mathcal{N}_i|}-\frac{\theta_2}{N},\frac{1}{|\mathcal{N}_j|}-\frac{\theta_2}{N}\Big)$  \\
\end{game}\hspace*{\fill}%
\caption[]{A binary action coordination game between two players deciding between two tasks with difficulties $\theta_1$ and $\theta_2$, respectively.}
\label{fig:bimatrix}
\end{figure}

\vspace{5pt}

\begin{proposition}
Consider the bimatrix game in \cref{fig:bimatrix}. Let $L_{ij}\Equaldef N/\min\{|\mathcal{N}_i|,|\mathcal{N}_j|\}$ and $\mathcal{S}_{ij}$ denote the set of pure-strategy Nash-Equilibria of the deterministic coordination game of \cref{fig:bimatrix}. The following holds:
\begin{align}
\theta_1-\theta_2 &<-L_{ij} \Rightarrow \mathcal{S}_{ij} = \{(1,1)\}; \\
-L_{ij}\leq\theta_1-\theta_2&\leq +L_{ij} \Rightarrow \mathcal{S}_{ij} = \{(1,1),(2,2)\}; \\ 
\theta_1-\theta_2 &> +L_{ij} \Rightarrow \mathcal{S}_{ij} = \{(2,2)\}.
\end{align}
\end{proposition}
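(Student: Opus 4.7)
The plan is to reduce the proposition to a direct case analysis over the four pure-strategy profiles $(a_i,a_j)\in\{1,2\}^2$ of the $2\times 2$ bimatrix in \cref{fig:bimatrix}. Because each player has only two actions, a profile is a pure-strategy Nash equilibrium precisely when, for each player, the equilibrium payoff weakly exceeds the payoff obtained by unilaterally switching to the other action. Each such comparison collapses to a single scalar inequality in $\theta_1-\theta_2$, so the proposition ultimately follows from intersecting at most two half-lines per profile.

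First I would rule out the miscoordination profiles $(1,2)$ and $(2,1)$. At $(1,2)$, the no-deviation condition for $i$ (switching to row $2$ against $j$'s column $2$) reads $-\theta_1/N \geq 1/|\mathcal{N}_i|-\theta_2/N$, i.e., $\theta_2-\theta_1 \geq N/|\mathcal{N}_i|$, while the no-deviation condition for $j$ (switching to column $1$ against $i$'s row $1$) reads $-\theta_2/N \geq 1/|\mathcal{N}_j|-\theta_1/N$, i.e., $\theta_1-\theta_2 \geq N/|\mathcal{N}_j|$. Adding these gives $0 \geq N/|\mathcal{N}_i|+N/|\mathcal{N}_j|>0$, a contradiction; so $(1,2)\notin\mathcal{S}_{ij}$ regardless of the realization of $(\theta_1,\theta_2)$. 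An identical argument with the roles of the two tasks swapped eliminates $(2,1)$.

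Next I would extract the exact conditions for the two coordination profiles. At $(1,1)$, the no-deviation inequalities for $i$ and $j$ simplify to $\theta_1-\theta_2 \leq N/|\mathcal{N}_i|$ and $\theta_1-\theta_2 \leq N/|\mathcal{N}_j|$, respectively, i.e., $(1,1)\in\mathcal{S}_{ij}$ iff $\theta_1-\theta_2\leq L_{ij}$ (interpreting $L_{ij}$ via the regular-network convention used in the paper, under which $|\mathcal{N}_i|=|\mathcal{N}_j|$ and hence $N/\min\{|\mathcal{N}_i|,|\mathcal{N}_j|\}$ coincides with the common degree bound). The calculation for $(2,2)$ is symmetric: swapping the indices $1\leftrightarrow 2$ in the payoffs produces $(2,2)\in\mathcal{S}_{ij}$ iff $\theta_2-\theta_1 \leq L_{ij}$, equivalently $\theta_1-\theta_2 \geq -L_{ij}$.

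Finally, combining these three findings yields the trichotomy directly. If $\theta_1-\theta_2<-L_{ij}$, the $(1,1)$-condition holds and the $(2,2)$-condition fails, so $\mathcal{S}_{ij}=\{(1,1)\}$; if $-L_{ij}\leq \theta_1-\theta_2\leq L_{ij}$ both conditions hold, so $\mathcal{S}_{ij}=\{(1,1),(2,2)\}$; and if $\theta_1-\theta_2>L_{ij}$ only the $(2,2)$-condition holds, so $\mathcal{S}_{ij}=\{(2,2)\}$. The only point that requires care is the reconciliation of the stated $L_{ij}=N/\min\{|\mathcal{N}_i|,|\mathcal{N}_j|\}$ with the tighter threshold $N/\max\{|\mathcal{N}_i|,|\mathcal{N}_j|\}$ produced by the derivation when the two neighborhoods differ in size; under the paper's regular-network setting they coincide, and otherwise the stated characterization should be read as a sufficient condition in the two extreme regimes together with a covering of the middle regime. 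Beyond this bookkeeping, no nontrivial obstacle arises, as everything reduces to linear inequalities.
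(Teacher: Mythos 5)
Your proof is correct and is exactly the ``inspection'' argument the paper invokes without writing out: you check each of the four pure profiles against the unilateral-deviation inequalities, eliminate the miscoordination profiles by summing the two conditions, and intersect the remaining half-lines in $\theta_1-\theta_2$. Your observation that the derivation actually yields the threshold $N/\max\{|\mathcal{N}_i|,|\mathcal{N}_j|\}$ rather than the stated $L_{ij}=N/\min\{|\mathcal{N}_i|,|\mathcal{N}_j|\}$ is a genuine and correct catch --- the middle case of the proposition is exact only when $|\mathcal{N}_i|=|\mathcal{N}_j|$, as in the regular networks the paper later restricts to --- and your reading of the statement in the irregular case is the right one.
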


\vspace{5pt}

\begin{proof}
The proof can be obtained by inspection using the definition of a Nash-Equilibrium \cite{Fudenberg:1998}.
\end{proof}

\vspace{5pt}

Therefore, there exists a regime in which the agents $i,j$ can choose to play one of multiple equilibria depending on the values of the variables $\theta_1$ and $\theta_2.$ Moreover, the set of equilibria depends on the neighborhood structure of the graph. 

Let $\theta\Equaldef [\theta_1 \ \theta_2]^{\mathsf{T}}$. The utility of agent $i$ in the game with one of its neighbors $j$ is given by:
\begin{multline}
\mathcal{U}_i(a_i,a_j,\theta) \Equaldef \mathbf{1}(a_i=1) \Big( \frac{1}{|\mathcal{N}_i|}\mathbf{1}(a_j=1)-\frac{\theta_1}{N}\Big)  \\ +\mathbf{1}(a_i=2) \Big( \frac{1}{|\mathcal{N}_j|}\mathbf{1}(a_j=2)-\frac{\theta_2}{N}\Big).
\end{multline}
In a graphical coordination game, agent $i$ simultaneously plays a single action with all of its neighbors $j$ in $\mathcal{N}_i$ \cite{Paarporn:2021,Arditti:2021}. Therefore, we define the following local utility function as
\begin{equation}
\mathcal{V}_i(a_i,a_{-i},\theta) \Equaldef \sum_{j\in\mathcal{N}_i} \mathcal{U}_i(a_i,a_j,\theta).
\end{equation}

\subsection{Bayesian Nash Equilibrium}

The agents in in this game act in a noncooperative manner by seeking to maximize their individual expected payoff with respect to their strategy policies \cite{Fudenberg:1998}. Let $\Gamma$ be a \textit{coordination strategy profile}, i.e., the collection of coordination policies used by all the agents in the system. Define 
\begin{equation}
\Gamma \Equaldef (\gamma_1, \cdots, \gamma_N),
\end{equation}
where $\gamma_i: \mathbb{R}^2\rightarrow \{1,2\}$ is a measurable function, $i\in[N]$. Similarly, let $\Gamma_{-i}$ denote the coordination strategy profile used by all the agents except agent $i$.

Given $\Gamma_{-i}$, the goal of the $i$-th agent is to solve the following (stochastic) optimization problem: 
\begin{equation}
    \underset{\gamma_i}{\mathrm{maximize}} \ \  \mathcal{J}_i(\gamma_i,\Gamma_{-i}) \Equaldef  \mathbf{E} \Big[ \mathcal{V}_i(A_i,A_{-i},\Theta) \Big],
\end{equation}
where the expectation is taken over the exogenous random variables $\Theta$, and $Z^i_t$, $i\in [N]$, $t\in\{1,2\}$. We are interested in obtaining BNE strategies.

\vspace{5pt}

\begin{definition}[Bayesian Nash Equilibrium]
The strategy profile $\mathbb{\gamma}^\star = (\gamma_1^\star,\cdots,\gamma_N^\star)$ is a Bayesian Nash Equilibrium for the multi-task Gaussian global game if:
\begin{equation}
\mathcal{J}_i(\gamma_i^\star,\gamma_{-i}^\star) \geq 
\mathcal{J}_i(\gamma_i,\gamma_{-i}^\star), \ \ \gamma_i \in \Gamma_i, \ \ \forall i \in [N].
\end{equation}
\end{definition}

\vspace{5pt}

\section{Definitions and Main results}

\subsection{Affine coordination strategies}

In a traditional Global Game formulations with a single task, under certain conditions, there is a unique BNE in the class of threshold policies, in which each agent simply compares its observation to a threshold \cite{Kanakia:2016,Carlsson:1993,Morris:2003,Mahdavifar:2017,Vasconcelos:2018,Vasconcelos:2022}. However, in our multi-task game, the class of policies of interest is a generalized version of a threshold policy in the following sense: each agent computes a linear function of their noisy observations and compare the result to a threshold. Therefore, the overall coordination policy is an affine function of the observation vector. %Such policies were also considered in \cite{Krishnamurthy:2009}.

\vspace{5pt}

\begin{definition}[Affine coordination strategy]
Let the observations of the $i$-th agent be denoted by $y^{(i)} \in \mathbb{R}^2$. An affine strategy for the multi-task global game is a function with the following structure:
\begin{equation}
    \gamma_i(y^{(i)}) = \begin{cases}
    1, &  \text{if} \ \ \lambda^{(i)\mathsf{T}}y^{(i)}\leq \tau^{(i)} \\
    2, & \text{otherwise,}
    \end{cases}
\end{equation}
where $\lambda^{(i)} \in \mathbb{R}^2$ and $\tau^{(i)} \in \mathbb{R}$.

\vspace{5pt}

%The linear-threshold policy is a natural generalization of the threshold policy of the single-task Global Games. 

\begin{remark}
A similar strategy structure has been considered in \cite{Krishnamurthy:2009}. However, the game studied in \cite{Krishnamurthy:2009} is very different from ours. The main difference is that in our setup, the agents start the game without a preassigned task, and must choose between two tasks to work on. Whereas in \cite{Krishnamurthy:2009} the agents start the game at a task and must choose between either switching to a different task or continuing at the current one.  
\end{remark}

%\textcolor{blue}{Need to describe how our problem is different from \cite{Krishnamurthy:2009}}.
\end{definition}

\subsection{Main result}

%\begin{theorem}
%If $\sigma^2_t \rightarrow \infty$, $t\in\{1,2\}$, there exists a BNE denoted by $\gamma^\star$ where all the agents use an identical affine strategy $\gamma_i^\star$ parametrized by 
%\begin{equation}
%    \lambda^{(i)\star} = \begin{bmatrix}
%    +1 \\
%    -1 
%    \end{bmatrix} \ \ \textup{and} \ \ \tau_i^\star =0, \ \ i \in [N].
%\end{equation}
%\end{theorem}

\begin{theorem}
Consider the multi-task global game with Gaussian observations. Consider the strategy profile $\gamma^{\min} = (\gamma^{\min}_1,\cdots, \gamma^{\min}_N)$, where 
\begin{equation}\label{eq:min_policy}
    \gamma^{\min}_i(y^{(i)}) = \begin{cases}
    1, &  \text{if} \ \ y^{(i)}_1 \leq y^{(i)}_2  \\
    2, & \text{otherwise.}
    \end{cases}
\end{equation}
If $\sigma_t^2\rightarrow \infty$, $t\in\{1,2\}$, then $\gamma^{\min}$ is a Bayesian Nash Equilibrium.
\end{theorem}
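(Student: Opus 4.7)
The plan is to verify the best-response condition directly. Fix an arbitrary agent $i$ and assume every other agent $j \in [N] \setminus \{i\}$ plays according to $\gamma^{\min}_j$. I would compute the difference in expected local utility between actions $1$ and $2$, conditioned on the observation $Y^{(i)} = y^{(i)}$, and show that the sign of this difference coincides everywhere with the decision rule specified by $\gamma^{\min}_i$. Because the local utility $\mathcal{V}_i$ is additive over neighbors, this difference splits naturally into a coordination term (summing $\mathbf{E}[\mathbf{1}(A_j=1) - \mathbf{1}(A_j=2) \mid y^{(i)}]$ over $j \in \mathcal{N}_i$) and a task-cost term proportional to $\mathbf{E}[\Theta_1 - \Theta_2 \mid y^{(i)}]$.

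Next I would evaluate each of these two conditional expectations in the diffuse-prior limit $\sigma_t^2 \to \infty$. Standard Gaussian conjugacy gives $\Theta_t \mid Y^{(i)}_t = y^{(i)}_t \sim \mathcal{N}\bigl(\tfrac{\sigma_t^2}{\sigma_t^2+\alpha_t^2}y^{(i)}_t,\tfrac{\sigma_t^2\alpha_t^2}{\sigma_t^2+\alpha_t^2}\bigr)$, which converges to $\mathcal{N}(y^{(i)}_t,\alpha_t^2)$ as $\sigma_t^2 \to \infty$. Hence $\mathbf{E}[\Theta_1 - \Theta_2 \mid y^{(i)}] \to y^{(i)}_1 - y^{(i)}_2$. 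For the coordination term, note that agent $j$ plays action $1$ precisely when $W_j \Equaldef Y^{(j)}_1 - Y^{(j)}_2 \leq 0$, and conditional on $y^{(i)}$ the random variable $W_j$ is Gaussian with mean $y^{(i)}_1 - y^{(i)}_2$ and variance $2(\alpha_1^2+\alpha_2^2)$ in the limit. Writing $u \Equaldef y^{(i)}_1 - y^{(i)}_2$ and $\beta \Equaldef \sqrt{2(\alpha_1^2+\alpha_2^2)}$, each neighbor contributes $1 - 2\Phi(u/\beta)$ to the normalized coordination term.

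Putting these pieces together, the conditional payoff advantage of action $1$ over action $2$ reduces, in the limit, to
\begin{equation}
\Delta(y^{(i)}) \;=\; 1 - 2\Phi(u/\beta) \;-\; \frac{|\mathcal{N}_i|}{N}\, u.
\end{equation}
Both summands are odd, strictly decreasing functions of $u$ that vanish at $u=0$, so $\Delta(y^{(i)}) \geq 0$ if and only if $u \leq 0$, i.e., $y^{(i)}_1 \leq y^{(i)}_2$. This is exactly the decision boundary of $\gamma^{\min}_i$, so $\gamma^{\min}_i$ is a best response to $\gamma^{\min}_{-i}$ for every realization of $y^{(i)}$. Since $i$ was arbitrary, $\gamma^{\min}$ is a BNE.

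The main obstacle is making the diffuse-prior limit rigorous: one must justify that both the posterior mean of $\Theta_1-\Theta_2$ and the posterior probability $\Pr(A_j = 1 \mid y^{(i)})$ pass to their limiting values inside the expectation defining $\mathcal{J}_i$. Because the indicators are bounded and all conditional distributions are Gaussian with variances converging monotonically, dominated convergence (or a direct closed-form computation followed by taking $\sigma_t^2 \to \infty$) suffices; the remaining algebra is routine.
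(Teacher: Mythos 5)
Your proof is correct, and at the top level it follows the same strategy as the paper: fix agent $i$, let every other agent play $\gamma^{\min}$, and show that the sign of the conditional payoff advantage of action $1$ over action $2$ coincides with the rule $y^{(i)}_1 \leq y^{(i)}_2$. In fact your $\Delta(y^{(i)})$ is, up to a factor of $2$, exactly the function $\mathcal{F}$ the paper analyzes: using $\mathbf{E}\left[\Phi(c-W)\right]=\Phi(c/\sqrt{2})$ for $W\sim\mathcal{N}(0,1)$, the paper's $\mathcal{F}(y)$ reduces to $\Phi(-u/\beta)-\tfrac{1}{2}-\tfrac{\rho}{2}u=\tfrac{1}{2}\Delta$ with your $u$ and $\beta$. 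Where you genuinely diverge is in how the sign of this quantity is established. The paper keeps the posterior randomness explicit as an auxiliary Gaussian $W$, invokes a separate lemma ($\mathbf{E}\left[\Phi(W)\right]=1/2$) to show $\mathcal{F}$ vanishes on the diagonal $y_1=y_2$, and then argues from the gradient $\nabla\mathcal{F}(y)=\delta(y)\,[-1,\,+1]^{\mathsf{T}}$ with $\delta(y)>0$ that $\mathcal{F}$ has the correct sign off the diagonal; its closing ``first-order approximation and repeat'' step is informal and really needs a mean-value or line-integral argument along the direction $(-1,+1)$ to be airtight. You instead observe that, conditioned on $y^{(i)}$ and in the diffuse limit, $Y^{(j)}_1-Y^{(j)}_2$ is exactly Gaussian with mean $u$ and variance $\beta^2$, which collapses everything to a single scalar function of $u$ that is odd, strictly decreasing, and zero at the origin. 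This is more elementary, dispenses with both the lemma and the gradient computation, and makes the monotonicity step fully rigorous; the price is that the reduction to a function of $u=y_1-y_2$ alone hinges on $d_1=d_2=1$, so it does not carry over to the non-diffuse setting, which is presumably why the paper organizes its proof around the two-dimensional structure it reuses for Theorem 2. Your closing remark on dominated convergence correctly identifies the one point both arguments otherwise leave informal, namely the precise sense of the limit $\sigma_t^2\to\infty$.
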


\vspace{5pt}

\section{Analysis}

Our results hinge on the analysis of the best response strategy of an agent given its observations about the state of system. Consider an agent $i\in[N]$ and fix the strategies of every other agent in the system $\Gamma_{-i}$. The best response of agent $i$ is determined by the following inequality:
\begin{multline}
\mathbf{E}\Bigg[ \frac{1}{|\mathcal{N}_i|}\sum_{j\in\mathcal{N}_i}\mathbf{1}(A_j=1) - |\mathcal{N}_i|\frac{\Theta_1}{N} \ \bigg| \ Y^{(i)}=y^{(i)} \Bigg] \stackrel{a_i^\star=1}{\geq} \\ \mathbf{E}\Bigg[ \frac{1}{|\mathcal{N}_i|}\sum_{j\in\mathcal{N}_i}\mathbf{1}(A_j=2) - |\mathcal{N}_i|\frac{\Theta_2}{N} \mid Y^{(i)}=y^{(i)} \Bigg], 
\end{multline}
which, after some algebraic manipulations, can be written as:
\begin{multline}
\sum_{j\in\mathcal{N}_i} \mathbf{P}(A_j=1\mid Y^{(i)}=y^{(i)}) \stackrel{a_i^\star=1}{\geq} \\ \frac{|\mathcal{N}_i|}{2} + \frac{|\mathcal{N}_i|^2}{2N}\mathbf{E}\Big[ \Theta_1 - \Theta_2 \ \Big| \ Y^{(i)}=y^{(i)}\Big].
\end{multline}
The decision rule above has a very intuitive interpretation: agent $i$ will choose Task $1$ if its aggregate belief on the actions of its neighbors is larger than half of the size of its neighborhood adjusted by a bias term proportional to the conditional expectation (minimum mean square estimate) of the difference in task difficulty. For instance, after observing its local signal $Y^{(i)}=y^{(i)}$, the task difficulties have the same mean, the best response for agent $i$ is equivalent to a simple majority rule. 

\vspace{5pt}

\begin{comment}
When we consider affine activation strategies, an interesting connection with Logistic regression problems arises. Therefore, establishing a bridge between distributed task allocation and distributed artificial intelligence.
\end{comment}

\vspace{5pt}

\subsection{Belief on the actions of agents using affine strategies}

Assume that agent $j$ uses an affine coordination strategy parametrized by $\lambda^{(j)}$ and $\tau^{(j)}$. Let the belief on the action of agent $j$ by agent $i$ given its observed vector $y^{(i)}$ be denoted by $\pi_{ij}(y^{(i)})$. Let $\Phi(\cdot)$ denote the cumulative distribution function of a standard Gaussian random variable. The following identity holds:
%\begin{multline}
%\mathbf{P}(A_j=1 \mid Y_i=y_i)  =  \\ \int_{\mathbb{R}^2}\mathbf{P}(a_{j}^T (\Theta+Z_{j}) \leq \tau_j \mid \Theta = \theta, Y_i=y_i)f_{\Theta\mid Y_i=y_i}(\theta) d\theta
%\end{multline}
%After some algebra, we obtain that 
\begin{multline}
\pi_{ij}(y^{(i)})  =   \int_{\mathbb{R}^2}\Phi\left(\frac{\tau^{(j)}-\lambda^{(j)T}\theta}{\sqrt{(\lambda^{(j)}_1\alpha_1)^2+(\lambda^{(j)}_2\alpha_2)^2}} \right)\\\times f_{\Theta\mid Y^{(i)}=y^{(i)}}(\theta) d\theta,
\end{multline}
where 
\begin{equation}
f_{\Theta\mid Y^{(i)}=y^{(i)}}(\theta) =f_{\Theta_1\mid Y_1^{(i)}=y_1^{(i)}}(\theta_1)f_{\Theta_2\mid Y_2^{(i)}=y_2^{(i)}}(\theta_2),
\end{equation}
with
\begin{equation}
\Theta_t\mid Y^{(i)}_{t}=y^{(i)}_{t} \sim \mathcal{N}\left(\frac{\sigma_t^2}{\sigma_t^2+\alpha_t^2}\cdot y^{(i)}_{t},\frac{\alpha_t^2\sigma_t^2}{\sigma_t^2+\alpha_t^2}\right).
\end{equation}

\vspace{5pt}

Using the the fact that linear combination of Gaussian random variables is Gaussian, we can express the belief as:
\begin{equation}
\pi_{ij} (y_i) = \mathbf{E}\left[\Phi\left( \frac{\tau_j - d_1\lambda^{(j)}_1y^{(i)}_1 - d_2\lambda^{(j)}_2y^{(i)}_2}{\sqrt{(\lambda^{(j)}_1\alpha_1)^2+(\lambda^{(j)}_2\alpha_2)^2}}
- W_j\right) \right],
\end{equation}
where
\begin{equation}
W_j\sim\mathcal{N}\left(0, \frac{(\lambda^{(j)}_1\tilde{\sigma}_1)^2+(\lambda^{(j)}_2\tilde{\sigma}_2)^2}{(\lambda^{(j)}_1\alpha_1)^2+(\lambda^{(j)}_2\alpha_2)^2}\right)%{}} \right)
\end{equation}

\begin{equation}
\tilde{\sigma_t} \Equaldef \sqrt{\frac{\alpha^2_t\sigma_t^2}{\alpha^2_t+\sigma_t^2}} \ \ \text{and} \ \ d_t \Equaldef\frac{\sigma_t^2}{\alpha^2_t+\sigma_t^2}.
\end{equation}

\vspace{5pt}

\subsection{Best response}

The best response (BR) to a strategy profile where the policies are affine is given by 
\begin{equation}
\sum_{j\in\mathcal{N}_i} \pi_{ij}(y^{(i)}) \stackrel{a_i^*=1}{\geq} \frac{|\mathcal{N}_i|}{2} + \frac{|\mathcal{N}_i|^2}{2N}\left(d_1y^{(i)}_1-d_2y^{(i)}_2\right).
\end{equation}

%Although, the best response can be analyzed for an arbitrary affine policy profile, w
We consider the case when the system is comprised of homogeneous agents using identical policies. This assumption is justified in the case of \textit{regular} networks. %Also, in most large-scale systems, the devices are mass-produced, and it makes sense to design an architecture whereby all agents implement the same policy. 
However, the policies constituting a BNE are nonidentical when the graph is irregular.

\subsection{Regular graphs and symmetric strategy profiles}

In a regular graph, every node has the same degree. Therefore, $|\mathcal{N}_i| = K$, $i\in[N]$. A homogeneous strategy profile is one where every agent $j\in\mathcal{N}_i$ uses the same affine policy, i.e.,
\begin{equation}
\tau^{(j)} = \tau \ \ \text{and} \ \ a^{(j)} = a, \ \ j\in \mathcal{N}_i.
\end{equation} 
The belief $\pi_{ij}(y_i)$ is given by:
\begin{equation}
\pi_{ij} (y^{(i)}) = \mathbf{E}\left[\Phi\left( \frac{\tau - d_1a_1y^{(i)}_1 - d_2a_2y^{(i)}_2}{\sqrt{(a_1\alpha_1)^2+(a_2\alpha_2)^2}}
 - W\right)\right],
\end{equation}
with 
\begin{equation}
W\sim\mathcal{N}\left(0, \frac{(a_1\tilde{\sigma}_1)^2+(a_2\tilde{\sigma}_2)^2}{(a_1\alpha_1)^2+(a_2\alpha_2)^2}\right).
\end{equation}
Therefore, the BR to a homogeneous strategy profile is given by
\begin{equation}
\pi_{ij} (y^{(i)}) \stackrel{a_i^*=1}{\geq}\frac{1}{2}+ \frac{\rho}{2}\left(d_1y^{(i)}_1-d_2y^{(i)}_2\right),
\end{equation}
where $\rho \Equaldef K/N$ is the \textit{density} of the regular graph connecting the agents.

\vspace{5pt}

%We  regimes that lead to tractable analysis: 1. The case when the variances of the prior distributions is asymptotically large. This case is considered in most of the literature in Global Games as the uninformative prior distribution. 2. The case when the number of neighbors of every agent is asymptotically large. This is the case when the system is very dense. Systems such as bacterial and vehicular networks are examples of large-scale networks.

\section{Diffuse Gaussian prior distribution}

Consider the case when $\sigma_t^2\rightarrow \infty$, $t\in\{1,2\}$, which corresponds to an assumption widely used in the economics literature known as an uniformly distributed random variable supported on the entire real line. Instead, we refer to that case as the \textit{diffuse Gaussian prior} distribution. When $\sigma_t^2\rightarrow \infty$, $t\in\{1,2\}$, we have
\begin{equation}
\tilde{\sigma}_t^2 \rightarrow \alpha_t^2, \ \ \text{and} \ \ d_t \rightarrow 1, \ \ t\in\{1,2\},
\end{equation}
and $W_j \sim \mathcal{N}(0,1), \ \ j\in[N].$

\vspace{5pt}

\begin{lemma}\label{lem:identity}
If $W\sim\mathcal{N}(0,1)$, then 
\begin{equation}
\mathbf{E}\big[ \Phi(W)\big] =\frac{1}{2}
\end{equation}
\end{lemma}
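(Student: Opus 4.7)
The plan is to exploit two symmetries simultaneously: the symmetry of the standard Gaussian distribution about the origin, and the identity $\Phi(-w)=1-\Phi(w)$ that follows from the symmetry of the standard Gaussian density. Since $W\sim\mathcal{N}(0,1)$ implies $-W\stackrel{d}{=}W$, I would write
\begin{equation}
\mathbf{E}\big[\Phi(W)\big] = \mathbf{E}\big[\Phi(-W)\big] = \mathbf{E}\big[1-\Phi(W)\big] = 1-\mathbf{E}\big[\Phi(W)\big],
\end{equation}
and then solve the resulting scalar equation to conclude $\mathbf{E}[\Phi(W)]=1/2$. This is essentially a one-line argument, and it avoids any integration whatsoever.

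As a sanity check, I would also mention the equivalent probabilistic derivation: introduce an independent $V\sim\mathcal{N}(0,1)$, observe that $\Phi(W)=\mathbf{P}(V\leq W\mid W)$, apply the tower property to get $\mathbf{E}[\Phi(W)]=\mathbf{P}(V\leq W)$, and finish by noting that $V-W\sim\mathcal{N}(0,2)$ is symmetric about zero so $\mathbf{P}(V\leq W)=1/2$. There is no real obstacle here; the only thing to be careful about is invoking the correct symmetry $\Phi(-w)=1-\Phi(w)$, which is just the statement that the standard Gaussian density is even. No further machinery is needed.
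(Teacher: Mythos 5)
Your proof is correct, and it takes a genuinely different route from the paper. The paper evaluates the double integral
\begin{equation}
\frac{1}{2\pi}\int_{-\infty}^{\infty}\int_{-\infty}^{w}e^{-\frac{w^2+\xi^2}{2}}\,\mathrm{d}\xi\,\mathrm{d}w
\end{equation}
directly by switching to polar coordinates, observing that the integration region $\{\xi\leq w\}$ is a half-plane, which subtends an angle of $\pi$ out of $2\pi$ under the rotationally invariant bivariate standard Gaussian. Your primary argument instead exploits the distributional symmetry $-W\stackrel{d}{=}W$ together with $\Phi(-w)=1-\Phi(w)$ to get the scalar equation $\mathbf{E}[\Phi(W)]=1-\mathbf{E}[\Phi(W)]$, which requires no integration at all and is arguably cleaner; it also generalizes immediately to any continuous distribution symmetric about zero with CDF $F$, giving $\mathbf{E}[F(W)]=1/2$ without any Gaussian-specific structure. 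Your second, ``sanity check'' derivation via $\mathbf{E}[\Phi(W)]=\mathbf{P}(V\leq W)$ for an independent $V\sim\mathcal{N}(0,1)$ is in fact the probabilistic restatement of exactly what the paper computes: the polar-coordinate integral is precisely the measure of the half-plane $\{v\leq w\}$. So your two arguments bracket the paper's proof nicely --- one strictly more elementary, one equivalent in substance but phrased without coordinates. No gaps.
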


\vspace{5pt}

\begin{proof}
Using polar coordinates, we have
\begin{IEEEeqnarray}{rCl}
\mathbf{E}\big[ \Phi(W)\big] & = & \frac{1}{2\pi}\int_{-\infty}^{\infty}\int_{-\infty}^{w}e^{\left(-\frac{w^2+\xi^2}{2}\right)}\mathrm{d}w\mathrm{d}\xi \\
& = & \frac{1}{2\pi}\int_{0}^{\infty}\int_{\frac{\pi}{4}}^{\frac{5\pi}{4}} re^{-\frac{r^2}{2}}\mathrm{d}\phi\mathrm{d}r = \frac{1}{2}.
\end{IEEEeqnarray}
\end{proof}

\vspace{5pt}

\begin{proof}\textit{(Proof of Theorem 1)}
Let $a_1=1$, $a_2=-1$ and $\tau=0$. Define the function $\mathcal{F}:\mathbb{R}^2\rightarrow \mathbb{R}$ as:
\begin{multline} \label{eq:Sigmoidal}
\mathcal{F}(y) \Equaldef \mathbf{E}\left[ \Phi\left(\frac{y_2-y_1}{\sqrt{\alpha_1^2+\alpha_2^2}}-W\right)\right] -\frac{1}{2} - \frac{\rho}{2}(y_2-y_1).
\end{multline}
Therefore, the best response when every opponent uses the policy in \cref{eq:min_policy} is given by
\begin{equation}\label{eq:BR}
\mathrm{BR}_i(y) = \begin{cases}
1, \ \ \text{if} \ \ \mathcal{F}(y) \geq 0 \\
2, \ \ \text{otherwise.}
\end{cases}
\end{equation}

Let $\mathcal{L}\Equaldef\{y \in \mathbb{R}^2\mid y_1 = y_2\}$. We have
\begin{equation}
\mathcal{F}(y) = \mathbf{E} \big[ \Phi(W)\big] - \frac{1}{2} \stackrel{(a)}{=} 0, \ \  y \in \mathcal{L}.
\end{equation}
where $(a)$ follows from \cref{lem:identity}.

For the second part of our argument we need to compute the gradient of $\mathcal{F}$, which is given by:
\begin{equation}
\nabla \mathcal{F}(y) = \underbrace{\Bigg(\mathbf{E}\Bigg[\frac{e^{-\frac{1}{2}\big(\frac{y_2-y_1}{\sqrt{\alpha_1^2+\alpha_2^2}}-W\big)^2}}{\sqrt{2\pi(\alpha_1^2+\alpha_2^2)}}\Bigg] + \frac{\rho}{2} \Bigg)}_{\Equaldef \delta(y)>0} \begin{bmatrix} -1 \\ +1\end{bmatrix}.
\end{equation}

Using the first order approximation of $\mathcal{F}$ around a point $\bar{y} \in \mathcal{L}$, we have:
\begin{IEEEeqnarray}{rCl}
\mathcal{F}(y) &\approx& \mathcal{F}(\bar{y}) + \nabla \mathcal{F}(\bar{y})\begin{bmatrix} y_1-\bar{y}_1 \\ y_2-\bar{y}_2\end{bmatrix} \\ & = &  \mathcal{F}(\bar{y}) + \delta(\bar{y})  \big(y_2-\bar{y}_2-y_1+\bar{y}_1\big)
\end{IEEEeqnarray}
Therefore, because $\delta(\bar{y})>0$, if
\begin{equation}
 y_1-\bar{y}_1< y_2-\bar{y}_2,
\end{equation}
then 
\begin{equation}
\mathcal{F}(y) > \mathcal{F}(\bar{y}).
\end{equation}
However, the function $\mathcal{F}$ remains constant  if
\begin{equation}
y_2-\bar{y}_2 = y_1-\bar{y}_1.
\end{equation}

%Similarly, we can show that the function $\mathcal{F}$ is monotone decreasing over lines parallel  to $\mathcal{L}$: 

Since for any point $\bar{y}\in \mathcal{L}$, there exists a sufficiently small neighborhood around it such that for all $y$ such that  $y_1-\bar{y}_1< y_2-\bar{y}_2$, we have $\mathcal{F}(y) > \mathcal{F}(\bar{y})$. Pick a point in this neighborhood and call it $\bar{\bar{y}}$, and repeat this argument to obtain a new neighborhood over which the function is increasing for all $y_1-\bar{\bar{y}}_1< y_2-\bar{\bar{y}}_2$, and so on. Finally, we conclude that $\mathcal{F}(y)<0$ for all points in $\{y \in \mathbb{R}^2 \mid y_1 \geq y_2 \}$.
\end{proof}

\begin{figure}[t!]
    \centering
    \includegraphics[width=\columnwidth]{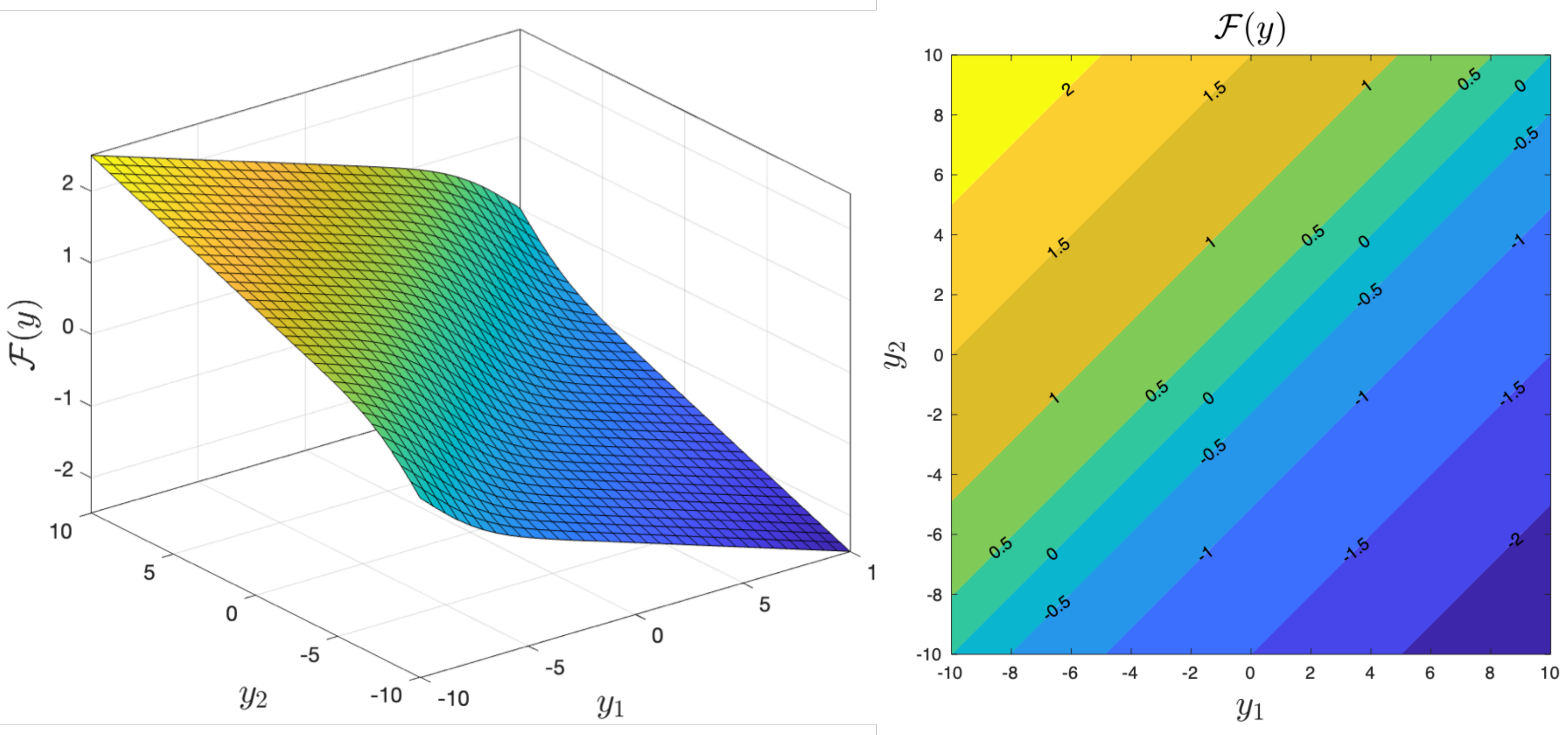}
    \caption{Sigmoidal function $\mathcal{F}$ in \cref{eq:Sigmoidal}, which defines the best-response policy in \cref{eq:BR}.}
    \label{fig:sigmoidal_BR}
\end{figure}

\vspace{5pt}

\begin{comment}
\section{Arbitrarily dense graphs}

In the limit when $K\rightarrow \infty$, the BR is simply determined by 
\begin{equation}
\pi_{ij} (y^{(i)}) \stackrel{a_i^*=1}{\geq}\frac{1}{2}.
\end{equation}
\end{comment}

%\newpage

\section{Non-diffuse Gaussian prior distribution}

In the non-asymptotic regime, we show that the BR to an homogeneous affine strategy profile is characterized by the existence of a switching curve $g$ such that:
\begin{equation}
\mathrm{BR}_i(y) = \begin{cases}
1, & \ \ y_1 \leq g(y_2) \\
2, & \ \ \text{otherwise}. 
\end{cases}
\end{equation}

\begin{theorem}\label{thm:increasing_switching}
If $a_1>0$ and $a_2<0$, the best-response switching curve $g(\cdot)$ is monotone increasing.
\end{theorem}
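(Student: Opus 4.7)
The plan is to characterize the switching curve $g$ implicitly as the zero level set of the best-response gap, and then invoke the implicit function theorem to determine the sign of $g'$. Concretely, define
\begin{equation*}
\mathcal{H}(y_1,y_2) \Equaldef \pi_{ij}(y_1,y_2) - \frac{1}{2} - \frac{\rho}{2}\bigl(d_1 y_1 - d_2 y_2\bigr),
\end{equation*}
so that $\mathrm{BR}_i(y)=1$ if and only if $\mathcal{H}(y_1,y_2)\geq 0$. The switching curve then satisfies $\mathcal{H}(g(y_2),y_2)=0$, and the entire question reduces to signing $\partial \mathcal{H}/\partial y_1$ and $\partial \mathcal{H}/\partial y_2$.

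First I would establish that for every fixed $y_2\in\mathbb{R}$ there is a unique $y_1=g(y_2)$ solving $\mathcal{H}(y_1,y_2)=0$. Differentiating $\pi_{ij}$ under the expectation (justified by the boundedness of $\phi=\Phi'$) gives
\begin{equation*}
\frac{\partial \pi_{ij}}{\partial y_1} = -\frac{d_1 a_1}{\sqrt{(a_1\alpha_1)^2+(a_2\alpha_2)^2}}\, \mathbf{E}\bigl[\phi(\cdot)\bigr],
\end{equation*}
which is strictly negative since $a_1>0$, $d_1>0$, and $\phi>0$. Subtracting $\rho d_1/2>0$ yields $\partial\mathcal{H}/\partial y_1<0$ globally. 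Combined with the asymptotics $\pi_{ij}\to 1$ as $y_1\to -\infty$ and $\pi_{ij}\to 0$ as $y_1\to +\infty$ (because the $\Phi$-argument is affine in $y_1$ with slope $-d_1 a_1<0$), and the fact that the remaining affine term $-\rho d_1 y_1/2$ contributes $\pm\infty$ in the corresponding directions, the map $y_1\mapsto \mathcal{H}(y_1,y_2)$ is a strictly decreasing bijection from $\mathbb{R}$ onto $\mathbb{R}$. The intermediate value theorem supplies existence and uniqueness of $g(y_2)$, and the implicit function theorem yields that $g$ is $C^1$.

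Monotonicity then follows from the sign of the transverse partial. By the same calculation,
\begin{equation*}
\frac{\partial \pi_{ij}}{\partial y_2} = -\frac{d_2 a_2}{\sqrt{(a_1\alpha_1)^2+(a_2\alpha_2)^2}}\, \mathbf{E}\bigl[\phi(\cdot)\bigr],
\end{equation*}
which is strictly positive because $a_2<0$ and $d_2>0$. Adding the positive constant $\rho d_2/2$ gives $\partial \mathcal{H}/\partial y_2>0$. The implicit function theorem therefore yields
\begin{equation*}
g'(y_2) = -\frac{\partial \mathcal{H}/\partial y_2}{\partial \mathcal{H}/\partial y_1} > 0,
\end{equation*}
so $g$ is monotone increasing on $\mathbb{R}$.

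The signs in the above computation are essentially forced by the hypotheses $a_1>0$ and $a_2<0$; the only real obstacle is the well-posedness step, namely verifying that $\mathcal{H}(\cdot,y_2)$ is genuinely surjective onto $\mathbb{R}$ so that a global switching function $g$ exists and the implicit function theorem applies uniformly. Once the asymptotic behavior of $\pi_{ij}$ as $y_1\to\pm\infty$ is pinned down, everything else is a one-line application of the implicit function theorem.
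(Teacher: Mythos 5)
Your proof is correct and follows essentially the same route as the paper's: both characterize $g$ implicitly as the zero level set of the best-response gap, establish existence and uniqueness of the root for each fixed $y_2$ via strict monotonicity in $y_1$ together with the asymptotic limits, and then obtain $g'>0$ by implicit differentiation after signing the two partial derivatives using $a_1>0$ and $a_2<0$. If anything, your write-up is slightly more explicit than the paper's, which leaves the sign computation of $\partial\mathcal{G}/\partial y$ as ``one can show'' (and contains a typo there), whereas you carry out both partials of $\pi_{ij}$ directly.
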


\begin{proof}
Define $\mathcal{G}:\mathbb{R}^2\rightarrow \mathbb{R}$, such that:
\begin{multline}\label{eq:switching_curve1}
\mathcal{G}(\xi,y_2) \Equaldef \mathbf{E}\left[ \Phi\left(\frac{\tau-a_1d_1\xi - a_2d_2y_2}{\sqrt{a_1^2\alpha_1^2+a_2^2\alpha_2^2}}-W\right)\right] -\frac{1}{2} \\ -\frac{\rho}{2}\left(d_1\xi-d_2y_2\right).
\end{multline}

%Let $y_2\in \mathbb{R}$. 
Define $g(y_2)$ as the set:
\begin{equation}\label{eq:switching_curve2}
g(y) \Equaldef \{\xi \in \mathbb{R} \mid \mathcal{G}(\xi,y) =0 \}.
\end{equation}

First we show that if $a_1>0$, the set $g(y)$ is a singleton for every $y_2\in\mathbb{R}$. For a fixed $y_2\in \mathbb{R},$ we will show that the solution of $\mathcal{G}(\xi,y_2) =0$ exists and is unique.
Let $\mathcal{G}_1(\xi,y_2)$ and $\mathcal{G}_2(\xi,y_2)$ be defined as
\begin{equation}
\mathcal{G}_1(\xi,y_2) \Equaldef \mathbf{E}\left[ \Phi\left(\frac{\tau-a_1d_1\xi - a_2d_2y_2}{\sqrt{a_1^2\alpha_1^2+a_2^2\alpha_2^2}}-W\right)\right]
\end{equation} 
and
\begin{equation}
\mathcal{G}_2(\xi,y_2) \Equaldef \frac{1}{2}  +\frac{\rho}{2}\left(d_1\xi-d_2y_2\right).
\end{equation}
Taking the partial derivatives of $\mathcal{G}_1(\xi,y_2)$ and $\mathcal{G}_2(\xi,y_2)$ with respect to $\xi$ yields:
\begin{multline}
\frac{\partial}{\partial \xi}\mathcal{G}_1(\xi,y) = \frac{-a_1d_1}{\sqrt{2\pi(a_1^2\alpha_1^2+a_2^2\alpha_2^2)}} \times \\ \mathbf{E}\left[ \exp\left(-\frac{1}{2}\left(\frac{\tau-a_1d_1\xi - a_2d_2y_2}{\sqrt{a_1^2\alpha_1^2+a_2^2\alpha_2^2}}-W\right)^2\right)\right]
\end{multline} 
and
\begin{equation}
\frac{\partial}{\partial \xi}\mathcal{G}_2(\xi,y_2) = \frac{\rho }{2}d_1.
\end{equation}

Then, we conclude that if $a_1>0$, the function $\mathcal{G}_1$ is monotone decreasing while $\mathcal{G}_2$ is monotone increasing in $\xi$. Moreover, 
\begin{equation}
\lim_{\xi \rightarrow -\infty} \mathcal{G}_1(\xi,y_2) = 1 \ \ \text{and} \ \ \lim_{\xi \rightarrow +\infty} \mathcal{G}_1(\xi,y_2) = 0,
\end{equation}
while
\begin{equation}
\lim_{\xi \rightarrow -\infty} \mathcal{G}_2(\xi,y_2) = -\infty \ \ \text{and} \ \ \lim_{\xi \rightarrow +\infty} \mathcal{G}_2(\xi,y_2) = +\infty.
\end{equation}

Therefore, there always exists a single point $\xi^\star$ such that $\mathcal{G}_1(\xi,y_2)=\mathcal{G}_2(\xi,y_2)$.
%\end{proof}
%\vspace{5pt}

We now show that if $a_1>0$ and $a_2<0$, the function $g(\cdot)$ is monotone increasing. Because the function $g(y)$ is implicitly defined as the unique solution of $\mathcal{G}(\xi,y) =0$, the derivative of $g(y)$ must be computed using implicit differentiation:
\begin{equation}
g'(y)=\frac{\mathrm{d}\xi}{\mathrm{d}y} = -\frac{\frac{\partial}{\partial y}\mathcal{G}(\xi,y)}{\frac{\partial}{\partial \xi}\mathcal{G}(\xi,y)}.
\end{equation}
One can show that if $a_1>0$ and $a_2<0$, we have
\begin{equation}
\frac{\partial}{\partial \xi}\mathcal{G}(\xi,y)<0 \ \ \text{and} \ \ \frac{\partial}{\partial \xi}\mathcal{G}(\xi,y)>0.
\end{equation}
Therefore, $g'(y)>0$.
\end{proof}

\vspace{5pt}

\begin{example}[Non-linearity of the best-response policy] Consider the following parameters for a system with $N=10$ agents over the regular network in \cref{fig:system} (i.e., $K=4$): $\sigma_1^2 = 1$, $\sigma_2^2 = 2$ and $\alpha_1^2 = \alpha_2^2 = 1$. The best-response policy's switching curve to an homogeneous strategy profile where $a_1 =1$, $a_2 = -2$ and $\tau=0$ is shown in \cref{fig:nonlinear}. As we can observe, the switching curve $g$ is increasing as implied by \cref{thm:increasing_switching}. However, it is non-linear. \Cref{fig:nonlinear} also shows two linear functions: one that approximates $g$ well in the interval $[-20,+20]$ and another that approximates it well when $y^{(i)}_2\rightarrow \pm \infty$. 
\end{example}

\begin{figure}[t!]
    \centering
    \includegraphics[width=\columnwidth]{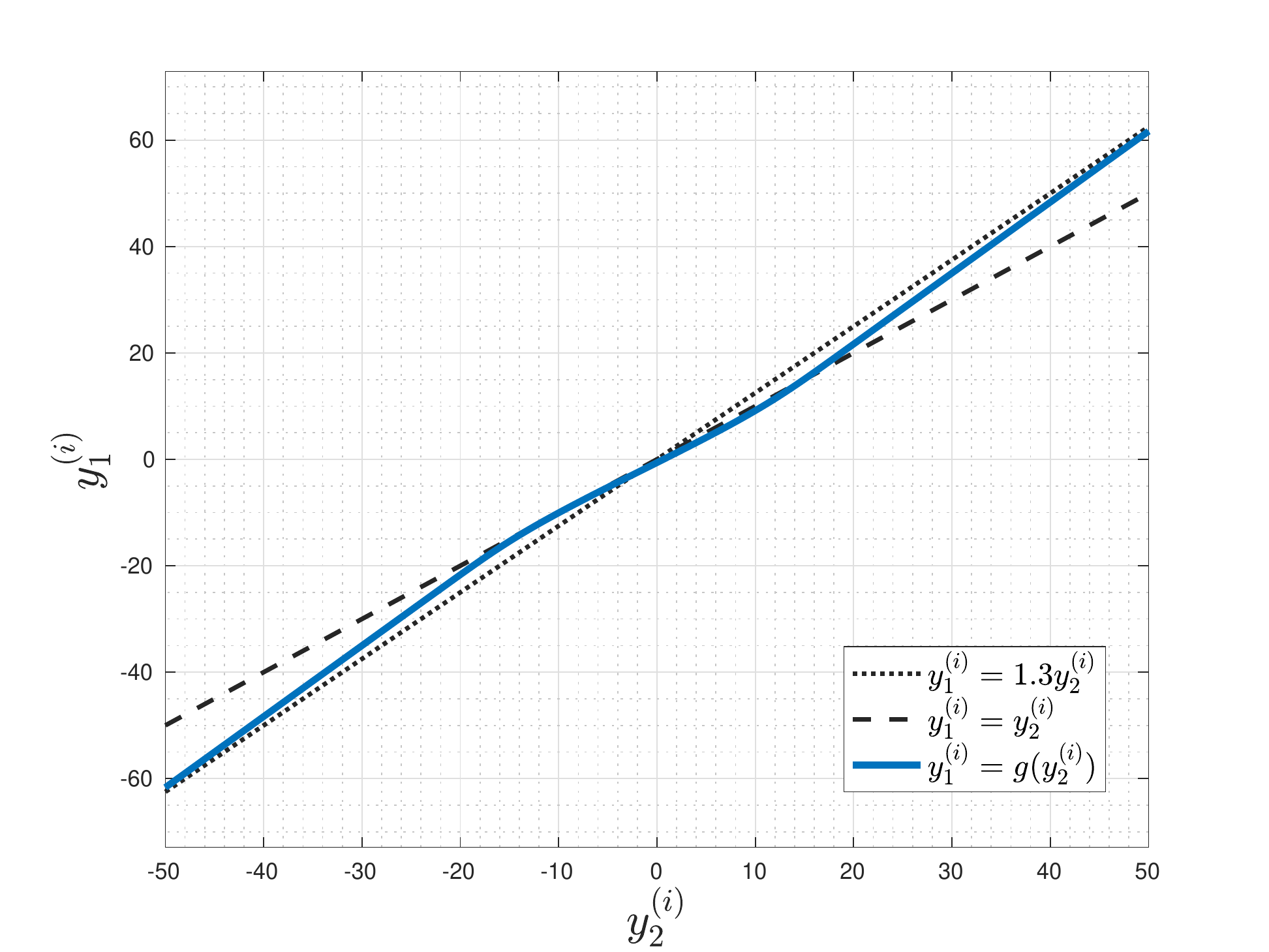}
    \caption{Best response policy's switching curve.}
    \label{fig:nonlinear}
\end{figure}

\subsection{Affine Bayesian Nash-Equilibrium}

Since the BR to a strategy profile consisting of homogeneous affine policies may be non-linear, we propose an iterative algorithm to compute an affine homogeneous BNE. In our algorithm, we use a special operator that projects the switching function $g(\cdot)$ onto the space of affine functions. Let the projection operator $\mathcal{P}$ be defined as follows: 
\begin{equation}
\mathcal{P}(g) \Equaldef \arg\min_{a_2,\tau} \ \mathbf{E} \Big[ \big( a_2Y_2 + \tau -g(Y_2) \big)^2 \Big].
\end{equation}

There are two reasons behind computing the affine approximation using this operator: 1. the points that are more likely to occur receive more weight than the unlikely ones, resulting in lower approximation errors; 2. it allows data-driven algorithms based on empirical risk minimization \cite{Shalev:2014}.

To compute the projection we need to evaluate the function $g$ over the entire real line, which can be a computationally prohibitive. Instead, we replace the projection with its approximate version. Let $\mathcal{D}_M=\{y_2^{(i)}(m)\}_{m=1}^M$ be a set of $M$ i.i.d. samples drawn from $f_{Y_2}$, define $\hat{\mathcal{P}}$ as 
\begin{equation}\label{eq:approximate_proj}
\hat{\mathcal{P}}(g) \Equaldef \arg\min_{a_2,\tau} \ \frac{1}{M} \sum_{m=1}^M \Big( a_2y_2(m) + \tau -g\big(y_2(m)\big) \Big)^2 .
\end{equation}

Algorithm 1 can be used to obtain symmetric affine coordination policies using the data-driven approximation to the projection operator stated above.

\begin{algorithm}[t]
    \caption{Affine Bayesian Nash Equilibrium}
    \label{alg: I}
    \begin{algorithmic}[1]
        \REQUIRE  Prior variances $(\sigma^2_1,\sigma^2_2)$, noise variances $(\alpha^2_1,\alpha^2_2)$, and number of agents $N$, dataset size $M$

        \ENSURE  Find  $a_2^{\star}$ and $\tau^{\star}$
        \STATE
        Initialize 
          $a_2<0$ and $\tau$, $\mathcal{D}_M \gets \{y_2(m)\}_{m=1}^M$
        \REPEAT

        \STATE Compute $\{g(y_2(m))\}_{m=1}^M$ using \cref{eq:switching_curve1,eq:switching_curve2}

        \STATE $(a_2,\tau) \gets \hat{\mathcal{P}}(g)$ using \cref{eq:approximate_proj}

        \UNTIL $(a_2,\tau)$ converge
    \end{algorithmic}
\end{algorithm}

\section{Numerical results}

Algorithm 1 provides a practical way of implementing coordination policies guided by theoretical optimality principles. Consider the following example where: $\sigma_1^2=2$, $\sigma_2^2=1$, $\alpha_1^2=1$ and $\alpha_2^2=1$. Consider a system with $N=10$ and regular graphs with different densities $\rho$. Running Algorithm 1, we obtained \cref{tab:policies}, where we notice that as the graph becomes more sparse (lower values of $\rho$), the threshold $\tau^\star$ increases to the point of becoming dominant term in the switching function of the affine BNE policies. However, as the graph becomes more dense, the threshold becomes negligible, and the BNE coordination policy is dominated by the linear term. An interesting feature of this problem is that for regular networks, the numerical results only depend on the density of the graph, and not on the number of agents $N$. It is not clear at this point what is the role of $N$ in this problem. However, \cref{tab:policies} implies that even though we are allowing for a more general class of policies for a multi-task assignment problem, if $\rho\rightarrow 0$, e.g. $K=\log N$, the optimal policy converges to a simple threshold policy.

\begin{table}[t!]
\centering
\caption{Affine BNE switching functions for networks with density $\rho$}
\label{tab:policies}
\begin{tabular}{|c| c| c |c|}
\hline 
$\rho$ & $a_2^\star$ & $\tau^\star$ & residual error \\
\hline \hline
 $0.1$ & $-1.3259$ & $+19.979$ & $0.0372$ \\ \hline
 $0.2$ & $-1.2433$ & $+9.2244$  & $0.1046$ \\ \hline
$ 0.3$ & $-1.2056$ & $+3.4724$  & $0.0278$ \\ \hline
 $0.4$ & $-1.2070$ &  $-3.69\times 10^{-4}$      & $ 0.0022$ \\ \hline
 $0.5 $& $-1.2156$ & $-1.75\times 10^{-4}$  & $0.0026$  \\ \hline
 $0.6$ & $-1.2231$& $-2.64\times 10^{-5}$& $0.0027$\\ \hline
  $0.7$ & $-1.2297$& $-9.39\times 10^{-5}$ & $0.0019$\\ \hline
 $0.8 $& $-1.2356$ & $-9.94\times 10^{-5}$ & $0.0032$\\ \hline
$ 0.9$ & $-1.2408$ & $+5.85\times 10^{-5}$& $0.0022$\\ \hline
% $1.0$ & & & \\ \hline
\end{tabular}

\end{table}

%Each agent uses a policy to compute its action/decision. Let $\gamma_j:\mathbb{R}^2\rightarrow \{1,2\}$ such that 

%Let $\boldsymbol{\theta}=(\theta_1,\theta_2)$ and $a_{-i}$ denote the action profile of all the agents except agent $i$. The payoff of the $i$-th agent is:
%\begin{multline}
%    \pi(a_i,a_{-i},\boldsymbol{\theta}) \Equaldef (1-\lambda)\cdot\mathbf{1}(a_i=1)\left(\sum_{j\neq i}\mathbf{1}(a_j=1) - \theta_1\right)\\
%    +\lambda\cdot\mathbf{1}(a_i=2)\left(\sum_{j\neq i}\mathbf{1}(a_j=2) - \theta_2\right).
%\end{multline}

%Here $\lambda$ is a preference parameter according that each agent has toward Task 1. Notice that when $\lambda\in\{0,1\}$, this payoff corresponds to the canonical Global Game framework. in which one of the actions is ``risky'' and the other is ``safe'', i.e., has a zero payoff independent of the actions of others. However, our main focus is on the case where $\lambda\in(0,1)$, which is a significant generalization of the class of Global Games, leading to (coordination) policies with a interesting properties.

\vspace{5pt}

%when there is only one task for the agent to decide work on or not,if the agent choose not to work on then the payoff is 0, similar architectures have been demonstrated before, e.g. \textit{Global Games with Poisson observations:Bio-inspired distributed coordination of multi-agent systems--Marcos M. Vasconcelos.}

\vspace{5pt}

%\textcolor{blue}{Note: Find another letter to represent the policy space or the policy profile.}

\vspace{3pt}

%\section{Analysis}

%\subsection{Figures and Tables}

%Figure Labels: Use 8 point Times New Roman for Figure labels. Use words rather than symbols or abbreviations when writing Figure axis labels to avoid confusing the reader. As an example, write the quantity ÒMagnetizationÓ, or ÒMagnetization, MÓ, not just ÒMÓ. If including units in the label, present them within parentheses. Do not label axes only with units. In the example, write ÒMagnetization (A/m)Ó or ÒMagnetization {A[m(1)]}Ó, not just ÒA/mÓ. Do not label axes with a ratio of quantities and units. For example, write ÒTemperature (K)Ó, not ÒTemperature/K.Ó

\section{Conclusions and Future work}

We introduced a framework of task allocation in regular robotic networks based on global games. Our formulation combines networked interactions and multiple fundamentals, and admits a simple linear BNE when the prior distributions are diffuse. When the priors have finite variance, we showed that the the switching function is nonlinear, but can be approximated reasonably well by affine policies. Using an expected $L_2$ distance, we provide an efficient algorithm that can be used to find approximate BNE in the class of affine policies. The next step in this work is to allow the network to be irregular and to study the fundamental limits of coordination as a function of the signal to noise ratio in the private signals.

%A conclusion section is not required. Although a conclusion may review the main points of the paper, do not replicate the abstract as the conclusion. A conclusion might elaborate on the importance of the work or suggest applications and extensions. 

\bibliographystyle{IEEEtran}

\bibliography{IEEEabrv,MTGG}

% Generated by IEEEtran.bst, version: 1.14 (2015/08/26)
\begin{thebibliography}{10}
\providecommand{\url}[1]{#1}
\csname url@samestyle\endcsname
\providecommand{\newblock}{\relax}
\providecommand{\bibinfo}[2]{#2}
\providecommand{\BIBentrySTDinterwordspacing}{\spaceskip=0pt\relax}
\providecommand{\BIBentryALTinterwordstretchfactor}{4}
\providecommand{\BIBentryALTinterwordspacing}{\spaceskip=\fontdimen2\font plus
\BIBentryALTinterwordstretchfactor\fontdimen3\font minus
  \fontdimen4\font\relax}
\providecommand{\BIBforeignlanguage}[2]{{%
\expandafter\ifx\csname l@#1\endcsname\relax
\typeout{** WARNING: IEEEtran.bst: No hyphenation pattern has been}%
\typeout{** loaded for the language `#1'. Using the pattern for}%
\typeout{** the default language instead.}%
\else
\language=\csname l@#1\endcsname
\fi
#2}}
\providecommand{\BIBdecl}{\relax}
\BIBdecl

\bibitem{Carlsson:1993}
H.~Carlsson and E.~Van~Damme, ``Global games and equilibrium selection,''
  \emph{Econometrica: Journal of the Econometric Society}, pp. 989--1018, 1993.

\bibitem{Morris:2003}
S.~Morris and H.~S. Shin, \emph{Global Games: Theory and Applications}, ser.
  Econometric Society Monographs.\hskip 1em plus 0.5em minus 0.4em\relax
  Cambridge University Press, 2003, vol.~1, pp. 56--114.

\bibitem{Saad:2010}
W.~Saad, Z.~Han, T.~Basar, M.~Debbah, and A.~Hjorungnes, ``Hedonic coalition
  formation for distributed task allocation among wireless agents,'' \emph{IEEE
  Transactions on Mobile Computing}, vol.~10, no.~9, pp. 1327--1344, 2010.

\bibitem{Jang:2018}
I.~Jang, H.-S. Shin, and A.~Tsourdos, ``Anonymous hedonic game for task
  allocation in a large-scale multiple agent system,'' \emph{IEEE Transactions
  on Robotics}, vol.~34, no.~6, pp. 1534--1548, 2018.

\bibitem{Krishnamurthy:2009}
V.~Krishnamurthy, ``Decentralized spectrum access amongst cognitive radios---an
  interacting multivariate global game-theoretic approach,'' \emph{IEEE
  Transactions on Signal Processing}, vol.~57, no.~10, pp. 3999--4013, 2009.

\bibitem{Kanakia:2016}
A.~Kanakia, B.~Touri, and N.~Correll, ``Modeling multi-robot task allocation
  with limited information as global game,'' \emph{Swarm Intelligence},
  vol.~10, no.~2, pp. 147--160, 2016.

\bibitem{Fudenberg:1998}
D.~Fudenberg, F.~Drew, D.~K. Levine, and D.~K. Levine, \emph{The theory of
  learning in games}.\hskip 1em plus 0.5em minus 0.4em\relax MIT press, 1998,
  vol.~2.

\bibitem{Paarporn:2021}
K.~Paarporn, B.~Canty, P.~N. Brown, M.~Alizadeh, and J.~R. Marden, ``The impact
  of complex and informed adversarial behavior in graphical coordination
  games,'' \emph{IEEE Transactions on Control of Network Systems}, vol.~8,
  no.~1, pp. 200--211, 2021.

\bibitem{Arditti:2021}
L.~Arditti, G.~Como, F.~Fagnani, and M.~Vanelli, ``Equilibria and learning
  dynamics in mixed network coordination/anti-coordination games,'' in
  \emph{2021 60th IEEE Conference on Decision and Control}, 2021, pp.
  4982--4987.

\bibitem{Mahdavifar:2017}
H.~Mahdavifar, A.~Beirami, B.~Touri, and J.~S. Shamma, ``Global games with
  noisy information sharing,'' \emph{IEEE Transactions on Signal and
  Information Processing over Networks}, vol.~4, no.~3, pp. 497--509, 2017.

\bibitem{Vasconcelos:2018}
M.~M. Vasconcelos, U.~Mitra, O.~Camara, K.~P. Silva, and J.~Boedicker,
  ``Bacterial quorum sensing as a networked decision system,'' in \emph{2018
  IEEE International Conference on Communications (ICC)}.\hskip 1em plus 0.5em
  minus 0.4em\relax IEEE, 2018, pp. 1--6.

\bibitem{Vasconcelos:2022}
M.~M. Vasconcelos, ``Bio-inspired multi-agent coordination games with {P}oisson
  observations,'' \emph{{IFAC-PapersOnLine}}, vol.~55, no.~13, pp. 180--185,
  2022.

\bibitem{Shalev:2014}
S.~Shalev-Shwartz and S.~Ben-David, \emph{Understanding machine learning: From
  theory to algorithms}.\hskip 1em plus 0.5em minus 0.4em\relax Cambridge
  university press, 2014.

\end{thebibliography}

\end{document}